\newtheorem{theorem}{Theorem} 
\newtheorem{prop}{Proposition} 
\newcommand{\rr}{\mathbf{r}}
\newcommand{\dd}{\mathrm{d}}
\newcommand{\XX}{\mathbf{X}}
\newcommand{\EE}{\mathbb{E}}
\newcommand{\rev}[1]{#1}
\newcommand{\twocols}[1]{\multicolumn{2}{c}{#1}}
\newcommand{\threecols}[1]{\multicolumn{3}{c}{#1}}
\newcommand{\fourcols}[1]{\multicolumn{4}{c}{#1}}
\newcommand{\tworowscol}[1]{\multicolumn{1}{c}{\multirow{2.5}{*}{#1}}}
\begin{document}

\title{\rev{Efficient FRW Transitions via Stochastic Finite Differences for Handling Non-Stratified Dielectrics}}

\author{Jiechen Huang and Wenjian Yu,~\IEEEmembership{Fellow,~IEEE}
\thanks{Manuscript received Feb. 28, 2025.}
\thanks{The authors are with Dept. Computer Science \& Tech., BNRist, State Key Laboratory of Cryptography and Digital Economy Security, Tsinghua University, Beijing 100084, China. (e-mail: yu-wj@tsinghua.edu.cn).}}

\markboth{Journal of \LaTeX\ Class Files,~Vol.~14, No.~8, August~2021}%
{Shell \MakeLowercase{\textit{et al.}}: A Sample Article Using IEEEtran.cls for IEEE Journals}


\maketitle

\begin{abstract}
The accuracy of floating-random-walk (FRW) based capacitance extraction stands only when the recursive FRW transitions are sampled unbiasedly according to surrounding dielectrics.
Advanced technology profiles, featuring complicated non-stratified dielectrics, challenge the accuracy of existing FRW transition \rev{schemes that approximate dielectrics with stratified or eight-octant} patterns.
In this work, we propose an algorithm named MicroWalk, \rev{enabling accurate FRW transitions} for arbitrary dielectrics \rev{while keeping high} efficiency. 
It is provably unbiased and equivalent to using transition probabilities solved by finite difference method, but at orders of magnitude lower cost (802$\times$ faster). An enhanced 3-D capacitance solver is developed with a hybrid strategy for complicated dielectrics, combining MicroWalk with the 
\rev{special treatment for the first transition cube and the}
analytical algorithm for stratified \rev{cubes}. Experiments on real-world \rev{structures} show that our solver \rev{achieves a significant accuracy advantage over existing FRW solvers, while preserving high efficiency.}
\end{abstract}

\begin{IEEEkeywords}
Capacitance extraction, floating random walk (FRW), stochastic finite differences, non-stratified dielectric.
\end{IEEEkeywords}

\section{Introduction}
For capacitance extraction under advanced process nodes, stricter accuracy requirements (typically $\!<\!5\%$ error) necessitate the use of 3-D \textit{field solvers} \cite{yu2014advanced}, which compute capacitances among conductors by solving the partial differential equation (PDE) of electrostatic interactions, i.e., \textit{Laplace's equation}.
The well-established methods include the finite difference method (FDM) and finite element method (FEM), etc \cite{jin2015theory}. 
\rev{In recent years,} the \textit{floating random walk} (FRW) method, a Monte Carlo approach for PDEs, is gaining attention \rev{for} capacitance extraction \cite{yu2013rwcap}. \rev{It obviates the need for spatial meshing or generating linear equations, and is thus highly parallelizable and scalable for large structures\cite{huang2025parallel}.}

Despite its superior parallelism and capacity, the FRW method faces difficulties in accurately modeling \rev{\textit{non-stratified and high-contrast}} \textit{dielectrics}  \cite{huang2024enhancing}. 
This weakness has become more pronounced,
as the advanced technology profiles involve conformal coatings, \rev{layout-dependent effects (LDEs)} or high-k materials.
These dielectrics affect electric field and capacitance, as reflected in Laplace's equation
\begin{equation}\label{eq:laplace}
    \nabla\cdot(\varepsilon\nabla \phi)\!=\!0 \! \Rightarrow \! \frac{\partial^2\phi}{\partial x^2}\!+\!\frac{\partial^2\phi}{\partial y^2}\!+\!\frac{\partial^2\phi}{\partial z^2}\!=\!0, \mathrm{at~interior~nodes,}
\end{equation}
where $\varepsilon$ is the spatially varying dielectric permittivity, and $\phi$ is the electric potential.
\rev{They} implicitly affect the FRW method by altering the transition probability, known as \textit{surface Green's function} (SGF) \cite{huang2024enhancing}\rev{,
whose correctness is critical to the unbiasedness of capacitance estimates.} 
\rev{A capacitance extraction task} typically entails millions of \rev{FRW} transitions \rev{(from transition cube's center to boundary), so performing them quickly with sufficient accuracy is demanded}.

The transition approaches in multi-dielectric environments are central in the research of the FRW method. 
Invoking FDM to compute the SGF in every step is accurate but impractically slow (see Fig. \ref{fig:framework}).
The \textit{de facto} workaround was to precompute SGF look-up tables for \rev{stratified dielectric configurations\cite{yu2013rwcap,zhang2016improved,yang2022reduce}, or to use an eight-octant model\cite{song2020floating} to approximately handle general dielectrics with homogenization}.
A recent study \cite{huang2024enhancing} proposed a fast analytical algorithm called AGF, which is only accurate for transition cubes with stratified dielectrics.
These \rev{techniques could still incur large error} 
for non-stratified and high-contrast dielectrics in advanced \rev{process technology}.
Another work\cite{visvardis2023deep} trained deep neural networks to predict SGFs for \rev{non-stratified} dielectrics, but \rev{with a noticeable performance drop. In\mbox{\cite[Table VII]{visvardis2023deep}}, the proposed method RW\_ML was, on average, 12$\times$ slower than RW\_NOLDE that entirely relies on precomputation and ignores non-stratified dielectrics, and 11$\times$ slower than RW\_OCT that uses precomputation and the eight-octant model\cite{song2020floating}.} It remains a challenge to efficiently perform unbiased FRW transitions for non-stratified dielectrics.

\begin{figure}[!t]
    \centering
      \setlength{\abovecaptionskip}{0cm}
      \setlength{\belowcaptionskip}{0cm}
    \includegraphics[width=\linewidth]{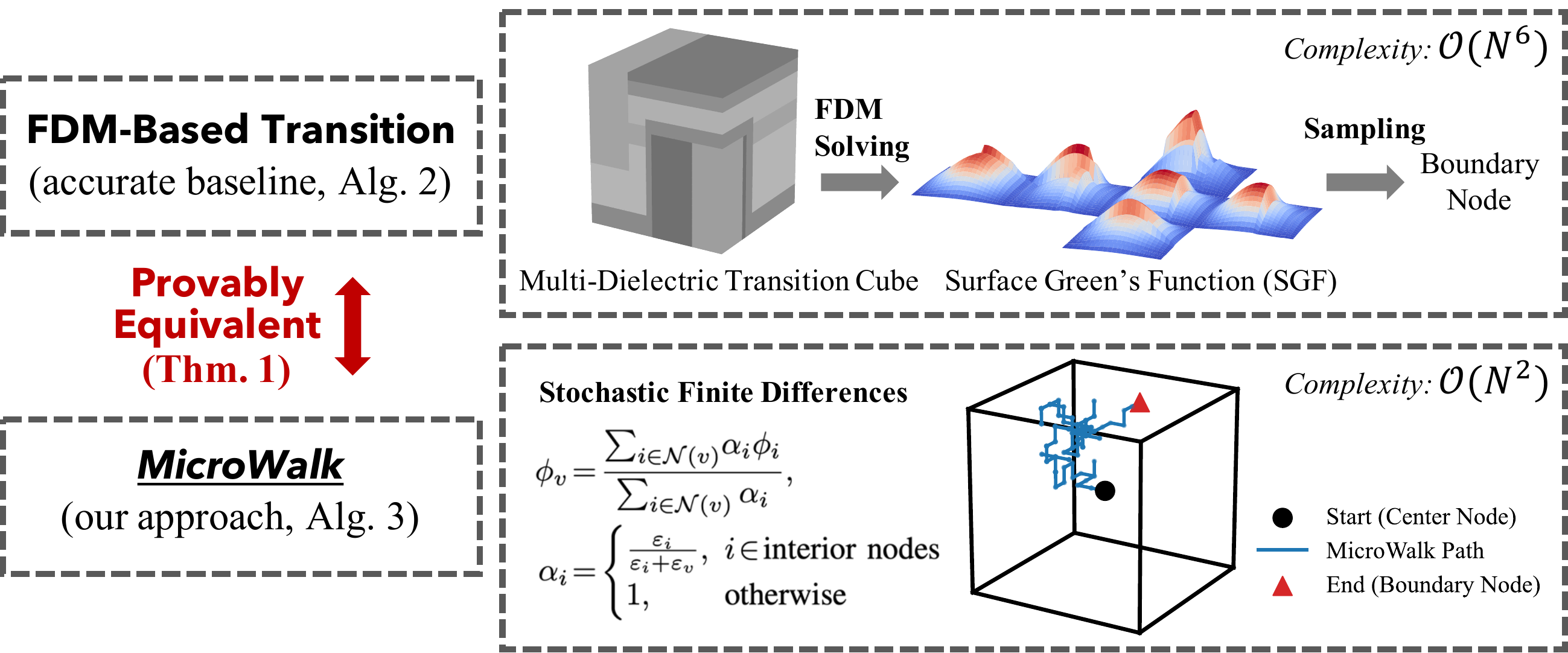}
    \caption{\rev{Two approaches of FRW transition for the transition cube with non-stratified dielectrics (depicted with different shade blocks). The proposed \textbf{MicroWalk} is as accurate as FDM, but with much lower computational cost.}}
    \label{fig:framework}
  \end{figure}
In this work, we \rev{tackle the defect of the FRW transition within} complicated dielectrics \rev{to} improve \rev{the FRW solver for} advanced technologies. Our main contributions include:
\begin{itemize}[leftmargin=*]
    \item A novel algorithm that performs unbiased FRW transitions without explicit SGF calculation, named \textbf{MicroWalk}, is derived based on \textit{stochastic finite differences} \cite{maire2016stochastic}. \rev{It enables accurate FRW transitions for any non-stratified transition cubes while preserving high efficiency.} 
    \item MicroWalk can be seen as the randomized version of FDM. We prove its \textit{equivalence} to sampling according to the SGFs computed by standard FDM (see Fig. \ref{fig:framework}). Furthermore, we show its time complexity is orders of magnitude lower than FDM, namely, $\mathcal{O}(N^2)$ vs. $\mathcal{O}(N^6)$. 
    \item A hybrid scheme with \rev{MicroWalk and AGF\cite{huang2024enhancing} is presented}.
\rev{The AGF with special treatment efficiently handles the first transition cube and the subsequent transitions with stratified dielectrics, and MicroWalk handles the subsequent transitions with more complicated dielectrics. The resulting} FRW solver accurately extracts \rev{capacitances in challenging} structures, \rev{for which} the state-of-the-art \cite{huang2024enhancing} produces wrong results. \rev{At the same time, this solver runs} $802\times$ faster than FDM-based transitions \rev{and costs comparable computational time to the  state-of-the-art} \cite{huang2024enhancing}.
\end{itemize}

\section{Preliminaries}
\subsection{FRW: Monte Carlo Approach for Capacitance Extraction}
\rev{The FRW method applies Monte Carlo integration to} capacitance extraction \cite{le1992stochastic}. It exploits the mean value property of Laplace's equation \rev{to solve the electric potential}:
\begin{gather}\label{eq:mvp}
    \phi(\rr) = \oint_{S(\rr)} P_{\varepsilon}(\rr_1|\rr) \phi(\rr_1) \dd s_1=\EE(\phi(\XX_{t+1}) | \XX_t=\rr),
\end{gather}
where $\{\XX_t,t\ge0 \}$ is a Markov chain with one-step transition probability $P_{\varepsilon}(\rr_1|\rr)$. In practice, $S(\rr)$ is a cubic surface centered at $\rr$.
$P_{\varepsilon}(\rr_1|\rr)$ is its \textit{surface Green's function} (SGF), determined by the distribution of permittivity $\varepsilon$ inside $S(\rr)$. 
By recursively following \eqref{eq:mvp} to perform transitions until $\phi(\XX_t)$ reaches a prescribed boundary condition, an unbiased estimate for $\phi(\XX_0)$ is obtained. This procedure, called \textit{FRW transition}, is the fundamental mechanism of the FRW method.

The capacitances of conductor $i$ can be represented by its induced charge $Q_i$ under certain voltage settings. Thus, utilizing Gauss's law and the above $\phi$-estimator,
a capacitance estimator can be derived.
We sketch the FRW algorithm as Alg.~\ref{algo:frw}, where 
\texttt{FirstTransition} corresponds to importance sampling and weight value calculation (see \cite[Alg. 3]{yu2013rwcap}) and \texttt{Transition} denotes the function for FRW transitions.
For a more theoretical setup and proofs, see \cite[Sec. III]{huang2024floating}.

\begin{algorithm}[h]
\begin{minipage}{\linewidth}
    \caption{FRW Algorithm for Capacitance Extraction.}\label{algo:frw}
\end{minipage}\KwIn{A 3-D structure of conductors and dielectrics; a master conductor $i$ and its Gaussian surface $G$.}
\KwOut{Capacitance values $C_{ij},\ \forall j$.}
$C_{ij} \gets 0,\ \forall j$; $n_{\text{path}} \gets 0$\;
\Repeat{\text{convergence criterion met}}{
    $\rr\gets \texttt{GaussianSurfaceSampling}(G)$\;
    $(\rr_1, \omega) \gets \texttt{FirstTransition}(\rr)$\;
    \Repeat{$\rr_1$ lands on a conductor surface}{
        $\rr_1 \gets \texttt{Transition}(\rr_1)$\;
    }
    $C_{ik} \gets C_{ik}+\omega$;\hfill \textcolor{blue}{// suppose $\rr_1$ is on conductor $k$;}\\ 
    $n_{\text{path}} \gets n_{\text{path}}+1$; 
}
$C_{ij} \gets \frac{C_{ij}}{n_{\text{path}}},\ \forall j$\;
\end{algorithm}

\subsection{Multi-Dielectric SGFs and Existing Approaches}\label{sec:sgf}
The SGFs are crucial as they define the probability distributions for FRW transitions. 
The unbiasedness of capacitance estimates holds only when the transitions are performed obeying correct random distributions. 
To facilitate this, transition cubes are discretized into $N\!\times\!N\!\times\!N$ uniform lattices, in which case \eqref{eq:mvp} becomes (only the center point needs estimation)
\begin{equation}\label{eq:discrete}
    \phi_{\text{center}} = \sum_{i=1}^{6N^2} \mathbf{P}_\varepsilon(\mathbf{x}_i)\bm\phi_B(\mathbf{x}_i)=\mathbb{E}_{\mathbf{x}\sim \mathbf{P}_\varepsilon}(\bm\phi_B(\mathbf{x})),
\end{equation}
where $\bm\phi_B$ denotes the potential of the $6N^2$ surface panels.
The FDM can compute the discrete SGF $\mathbf{P}_\varepsilon$ for general dielectric configurations with a time complexity of $\mathcal{O}(N^6)$, but this is too slow for online (real-time) computation.

\rev{For transition cubes with stratified (planar) dielectrics}, earlier works use offline FDM to precompute SGF look-up tables \rev{and apply dielectric homogenization to find the best match\cite{yu2013rwcap, zhang2016improved, yang2022reduce}. The AGF approach\cite{huang2024enhancing} enables} online computation for any stratified dielectrics, \rev{improving overall accuracy without efficiency loss}.
\textbf{For \rev{handling} non-stratified dielectrics,} most existing works homogenize them into solvable patterns such as eight-octant cubes \cite{song2020floating} or stratified cubes \cite{huang2024enhancing}.
However, these empirical approximations compromise accuracy. 
Another work uses deep neural networks to predict SGFs for general dielectrics. \rev{However, it considerably degrades} computational speed\rev{, compared to the FRW version with precomputed tables and octant model\mbox{\cite[Table VII]{visvardis2023deep}}.} 
\rev{Notice that the \texttt{FirstTransition} in Alg.~\ref{algo:frw} requires the calculation of weight value, whose variance is crucial to the convergence rate of Monte Carlo estimation. This makes stochastic finite differences unsuitable for handling the first transition. In this work, we focus on the accurate treatment of subsequent transition cubes (except the first one) with complex non-stratified dielectrics.}

\section{\rev{Improved FRW Solver with the Transitions Based on Stochastic Finite Differences}}

\subsection{Finite Difference Schemes for Generating SGFs}
\begin{figure}[h]
  \centering
    \setlength{\abovecaptionskip}{0cm}
	\setlength{\belowcaptionskip}{0cm}
  \includegraphics[width=0.85\linewidth]{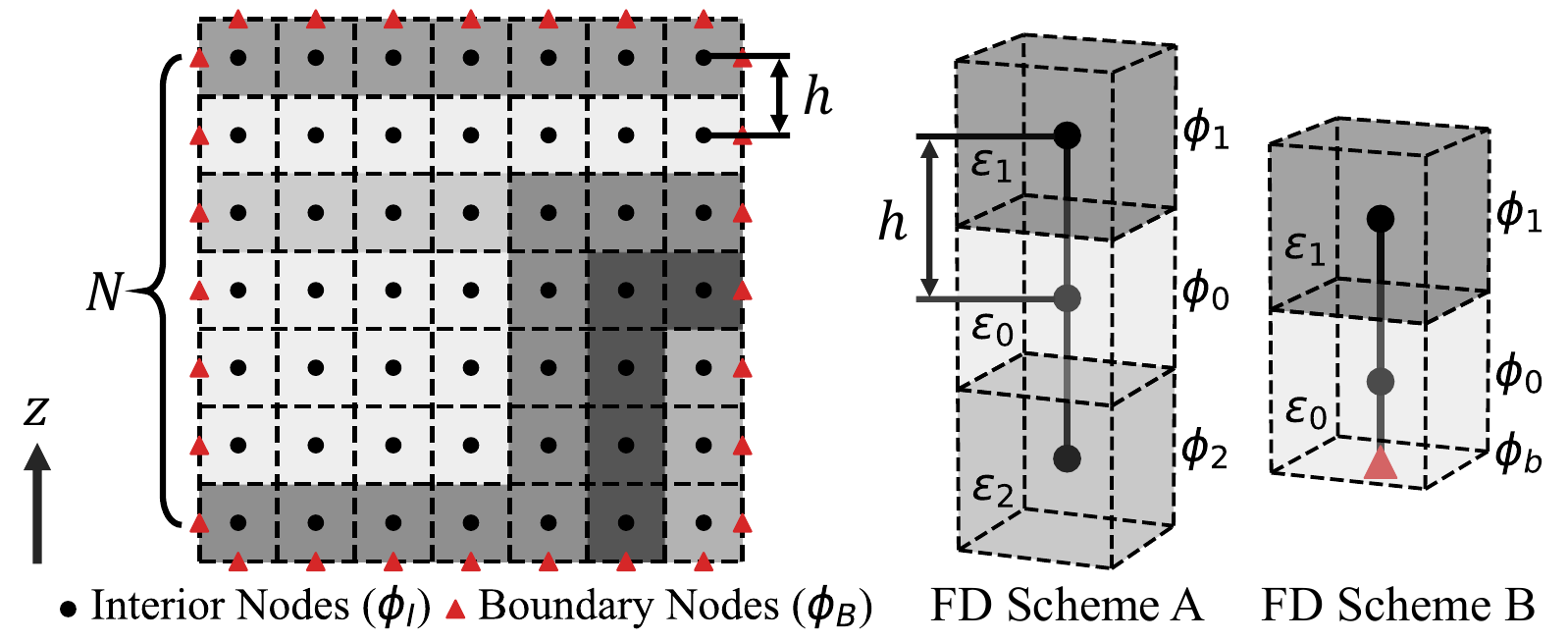}
  \caption{FDM for transition cube's SGF and two finite difference schemes.}
  \label{fig:fdm_scheme}
\end{figure}
Consider a voxelized transition cube (see its front cross-section in the left part of Fig. \ref{fig:fdm_scheme}), and assume the dielectric permittivity in each voxel is constant. 
$N^3$ interior nodes are allocated at the voxel centers and $6N^2$ boundary nodes are at surface panel centers.
At interior nodes, 
two finite difference schemes are utilized to \rev{convert \mbox{\eqref{eq:laplace}} to} linear equations, as depicted in the right part of Fig. \ref{fig:fdm_scheme}. 
Scheme A is for a node (with electric potential $\phi_0$) with two adjacent interior nodes ($\phi_1$ and $\phi_2$).
Let $\phi'_1$ ($\phi'_2$) be electric potential at the midpoint to node with potential $\phi_1$ ($\phi_2$). 
The second-order partial difference in \eqref{eq:laplace} becomes
\begin{equation}\label{eq:fdscheme1}
    \frac{\partial^2\phi}{\partial z^2} \approx \frac{\phi'_1+\phi'_2-2\phi_0}{h^2/4}=\frac{\frac{\varepsilon_1(\phi_1-\phi_0)}{\varepsilon_1+\varepsilon_0}+\frac{\varepsilon_2(\phi_2-\phi_0)}{\varepsilon_2+\varepsilon_0}}{h^2/4},
\end{equation}
where the second equality is based on the interface condition: $\varepsilon_1 (\phi_1-\phi'_1) =\varepsilon_0(\phi'_1-\phi_0)$. 
For Scheme B involving a boundary node, 
we similarly derive the partial difference:
\begin{equation}
    \frac{\partial^2\phi}{\partial z^2} \approx \frac{\frac{\varepsilon_1(\phi_1-\phi_0)}{\varepsilon_1+\varepsilon_0}+\phi_b-\phi_0}{h^2/4}.
\end{equation}
Combining these formulas (also those along $x$- and $y$-axis), we can rewrite \eqref{eq:laplace} for any interior node $v$ as 
\begin{equation}\label{eq:phi}
\phi_v\!=\!\frac{\sum_{i\in\mathcal{N}(v)}\!\alpha_{i}\phi_i}{\sum_{i\in\mathcal{N}(v)} \alpha_i},\ \alpha_i\!=\!
    \begin{cases}
        \frac{\varepsilon_i}{\varepsilon_i+\varepsilon_v}, & \hspace{-6pt} \text{node} ~i\!\in\!\text{interior\ nodes}\\
        1, & \hspace{-5pt} \text{otherwise}\\
    \end{cases},
\end{equation}
where $\mathcal{N}(v)$ is node $v$'s neighbor set. 
Combining these equations for all interior nodes, we obtain a linear equation system:
\begin{equation}\label{eq:mat}
    \mathbf{A}_{II} \bm{\phi}_I - \mathbf{A}_{IB} \bm{\phi}_B = \mathbf{0} ~ \Rightarrow ~  \bm{\phi}_I=\mathbf{A}_{II}^{-1}\mathbf{A}_{IB}\bm{\phi}_B,
\end{equation}
where $\bm{\phi}_I$ and $\bm{\phi}_B$ denote the potential at interior and boundary nodes, $\mathbf{A}_{II}\!\!\in\!\! \mathbb{R}^{N^3\!\times\! N^3}$, and $\mathbf{A}_{IB}\!\!\in\!\!\mathbb{R}^{N^3\!\times\! 6N^2}$. Let $\mathbf{e}_{c}$ be the indicator vector for the center node so that $\phi_{\text{center}}\!=\!\mathbf{e}^T_c\bm\phi_I$.
By comparing \eqref{eq:discrete} with \eqref{eq:mat}, we derive the discrete SGF:
\begin{equation}
\mathbf{P}_{\varepsilon}=\mathbf{e}_{c}^T\mathbf{A}_{II}^{-1}\mathbf{A}_{IB},
\end{equation}
which is a set of probabilities for sampling on $S$.
This FDM-based method can compute SGFs for arbitrary dielectric configurations, and is described as 
Alg. \ref{algo:fdm} in align with function \texttt{Transition} in Alg. \ref{algo:frw}. Since $\mathbf{A}_{II}$ is highly sparse, its time complexity is $\mathcal{O}(N^6)$ \cite{huang2024enhancing}.
\begin{algorithm}[h]
    \caption{The FDM-based \texttt{Transition}$(\rr_1)$.}
\label{algo:fdm}
$S \gets \text{build the transition cube centered at } \rr_1$\;
Compute $\mathbf{A}_{II}$ and $\mathbf{A}_{IB}$ in \eqref{eq:mat} according to the FD formulas \eqref{eq:fdscheme1}-\eqref{eq:phi} and the dielectric configuration in $S$\;
$\mathbf{P}_\varepsilon \gets \mathbf{e}_{c}^T\mathbf{A}_{II}^{-1}\mathbf{A}_{IB}$\;
\Return{a random point on $S$ sampled according to $\mathbf{P}_\varepsilon$}\;
\end{algorithm}
\subsection{\rev{MicroWalk: Transitioning} via Stochastic Finite Differences}
In this subsection, we derive an equivalent algorithm to Alg. \ref{algo:fdm}, based on the observation that
the coefficients of $\phi_i$ in \eqref{eq:phi} form a set of probabilities.
Thus, \eqref{eq:phi} implies an absorbing Markov chain \cite{grinstead2012introduction} for computing the electric potential on the finite difference lattice,
where the interior and boundary nodes are transient and absorbing states, respectively. The corresponding transition matrix is
\begin{equation}
    \mathbf{Q} = \begin{bmatrix}
        \mathbf{I}_{N^3}- \mathbf{A}_{II} & \mathbf{A}_{IB}\\
        \mathbf{O} & \mathbf{I}_{6N^2}\\
    \end{bmatrix},
\end{equation}
where $\mathbf{I}$ and $\mathbf{O}$ denote the identity matrix and zero matrix respectively.
The transitions following $\mathbf{Q}$ can be seen as \textit{stochastic finite differences} for Laplace's equation \cite{maire2016stochastic}.
They start at the center and repeat recursively until a boundary node is reached, which serves as the estimation point for $\phi_{\text{center}}$. This derives an implementation of function \texttt{Transition} as in Alg. \ref{algo:microwalk}, named the \textbf{MicroWalk} algorithm. 
\begin{algorithm}[h]
\begin{minipage}{0.95\linewidth}
    \caption{MicroWalk: an implementation of \texttt{Transition}$(\rr_1)$ based on stochastic finite differences.}\label{algo:microwalk}
\end{minipage}$S \gets \text{build the transition cube centered at } \rr_1$\;
$v \gets\text{center node of the finite difference lattice in }S$\;
\Repeat{$v\in$ boundary nodes}{
    Compute coefficients $\alpha_i,\ \forall i\in \mathcal{N}(v)$ in \eqref{eq:phi}\;
    $\beta_i\gets \frac{\alpha_i}{\sum_{i\in \mathcal{N}(v)} \alpha_i},\ \forall i \in \mathcal{N}(v)$; \hfill \textcolor{blue}{// normalization;}\\
    $v\gets\text{sample a node from } \mathcal{N}(v)$ according to $\{\beta_i\}$\;
}
\Return{the point corresponding to node $v$}\;
\end{algorithm}

The MicroWalk algorithm is innovative in that it does not require explicit SGF calculation while achieving unbiased FRW transitions.
It even avoids building the complete dielectric representations, which would already cost $\mathcal{O}(N^3)$ time.
It can be viewed as simulating a particle diffusion in the dielectric lattice.
In the following, Theorem \ref{thm:eq} proves its correctness and equivalence to the FDM-based transitions, while \rev{Proposition~\ref{thm:complexity}} establishes its efficiency with  $\mathcal{O}(N^2)$ time \rev{complexity. } 
\begin{theorem}\label{thm:eq}
    Alg. \ref{algo:fdm} and Alg. \ref{algo:microwalk} are equivalent, i.e., their output points follow the same distribution exactly.
\end{theorem}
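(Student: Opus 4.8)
The plan is to recognize that MicroWalk (Alg.~\ref{algo:microwalk}) is \emph{literally} the absorbing Markov chain generated by the matrix $\mathbf{Q}$, and to identify the distribution $\mathbf{P}_\varepsilon$ from which Alg.~\ref{algo:fdm} samples with the center row of this chain's absorption-probability matrix. First I would check that Alg.~\ref{algo:microwalk} defines a legitimate absorbing chain. The normalization $\beta_i=\alpha_i/\sum_{j\in\mathcal{N}(v)}\alpha_j$ in line~5 forces $\sum_{i\in\mathcal{N}(v)}\beta_i=1$ with every $\beta_i>0$, since each $\alpha_i>0$ (permittivities are positive, so $\alpha_i=\varepsilon_i/(\varepsilon_i+\varepsilon_v)\in(0,1)$ or $\alpha_i=1$). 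Reading \eqref{eq:phi} in its normalized form, the diagonal of $\mathbf{A}_{II}$ is $1$ and the off-diagonals are $-\beta_i$, while $\mathbf{A}_{IB}$ collects the boundary coefficients $\beta_k$; hence the interior-node block $[\mathbf{I}_{N^3}-\mathbf{A}_{II},\,\mathbf{A}_{IB}]$ of $\mathbf{Q}$ is row-stochastic and each of its rows is exactly the sampling distribution $\{\beta_i\}$ used in line~6. Writing $\mathbf{R}:=\mathbf{I}_{N^3}-\mathbf{A}_{II}$ (transient-to-transient) and $\mathbf{S}:=\mathbf{A}_{IB}$ (transient-to-absorbing), a MicroWalk trajectory launched at the center node is precisely one realization of this chain run until absorption.

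Next I would establish that absorption is almost sure, equivalently that $\mathbf{A}_{II}=\mathbf{I}-\mathbf{R}$ is invertible. The finite-difference lattice is connected, each interior node reaches every neighbor with strictly positive probability, and the nodes adjacent to the cube surface have strictly positive total one-step probability of leaving the transient set. Consequently there is a finite positive-probability path from any interior node to the boundary, the substochastic $\mathbf{R}$ has spectral radius strictly below $1$, the Neumann series $\sum_{k\ge0}\mathbf{R}^k$ converges, and $(\mathbf{I}-\mathbf{R})^{-1}=\mathbf{A}_{II}^{-1}$ exists.

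Then I would close the argument by first-step analysis. Let $\mathbf{B}$ be the absorption matrix with entry $B_{vk}$ equal to the probability that the chain started at interior node $v$ is absorbed at boundary node $k$. Conditioning on the first move gives $B_{vk}=\sum_{i}R_{vi}B_{ik}+S_{vk}$, i.e.\ $\mathbf{B}=\mathbf{R}\mathbf{B}+\mathbf{S}$, so $(\mathbf{I}-\mathbf{R})\mathbf{B}=\mathbf{S}$; invertibility of $\mathbf{A}_{II}$ makes this solution unique and yields $\mathbf{B}=\mathbf{A}_{II}^{-1}\mathbf{A}_{IB}$. Since MicroWalk begins at the center node, its output is distributed as the center row $\mathbf{e}_c^T\mathbf{B}=\mathbf{e}_c^T\mathbf{A}_{II}^{-1}\mathbf{A}_{IB}=\mathbf{P}_\varepsilon$, which by \eqref{eq:discrete}--\eqref{eq:mat} is exactly the distribution over surface panels from which Alg.~\ref{algo:fdm} draws its returned point. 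Identifying each boundary node with its surface-panel point then gives that the two algorithms' outputs share the same distribution, as claimed.

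I expect the algebraic identification (the fundamental-matrix/first-step step) to be routine, so the main obstacle is the pair of structural facts underpinning it: justifying row-stochasticity of $\mathbf{Q}$ directly from the normalized form of \eqref{eq:phi} (confirming the diagonal of $\mathbf{A}_{II}$ is $1$, so that $\mathbf{I}-\mathbf{A}_{II}$ and $\mathbf{A}_{IB}$ reproduce the same vector $\{\beta_i\}$ sampled in line~6), and proving convergence of the Neumann series via connectivity of the lattice together with the strictly positive escape probability at surface-adjacent nodes. Once these are secured, the equivalence of distributions is immediate.
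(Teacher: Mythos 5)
Your proposal is correct and takes essentially the same route as the paper: identify MicroWalk with the absorbing Markov chain $\mathbf{Q}$, and show that its absorption-probability matrix satisfies $\mathbf{B}=\mathbf{A}_{II}^{-1}\mathbf{A}_{IB}$, so the output point is distributed as $\mathbf{e}_c^T\mathbf{B}=\mathbf{e}_c^T\mathbf{A}_{II}^{-1}\mathbf{A}_{IB}=\mathbf{P}_\varepsilon$, matching Alg.~\ref{algo:fdm}. The only difference is that where the paper obtains $\mathbf{B}=\mathbf{A}_{II}^{-1}\mathbf{A}_{IB}$ by directly citing \cite[Theorem 11.6]{grinstead2012introduction}, you derive it yourself (row-stochasticity of the transient block, almost-sure absorption via the Neumann series, and first-step analysis), which makes the argument self-contained but does not constitute a different method.
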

\begin{proof}
    We need to show that the output point of Alg. \ref{algo:microwalk}, denoted $v^*$, follows the distribution of the discrete SGF $\mathbf{P}_\varepsilon$. 
    Let $\mathbf{B}$ denote the \textit{absorption probability matrix} of the absorbing chain $\mathbf{Q}$, where $\mathbf{B}_{ij}$ is the probability of absorption in state $j$ if starting from transient state $i$. By definition, $v^*\!\sim\! \mathbf{e}^T_c \mathbf{B}$. Directly following \cite[Theorem 11.6]{grinstead2012introduction}, we have
    \begin{equation}
        \mathbf{B}= \mathbf{A}_{II}^{-1}\mathbf{A}_{IB}  ~ \Rightarrow  ~ v^*\sim \mathbf{e}^T_c\mathbf{A}_{II}^{-1}\mathbf{A}_{IB}=\mathbf{P}_\varepsilon,
    \end{equation}
    which proves the original statement.
\end{proof}
\begin{prop}\label{thm:complexity}
    The time complexity of Alg. \ref{algo:microwalk} is about $\mathcal{O}(N^2)$.
\end{prop}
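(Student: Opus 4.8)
The plan is to factor the running time as the expected number of walk steps times the per-step cost and to bound the two factors separately. First I would argue that each pass through the loop of Alg.~\ref{algo:microwalk} costs $\mathcal{O}(1)$: every interior node of the $N\times N\times N$ lattice has at most six neighbors, so reading off the local permittivities, forming the coefficients $\alpha_i$ via \eqref{eq:phi}, normalizing them to $\{\beta_i\}$, and drawing a single categorical sample are all operations whose cost is independent of $N$ (this is exactly why MicroWalk need not materialize the full $\mathcal{O}(N^3)$ dielectric representation). The entire complexity is then governed by $T$, the number of steps until a boundary node is first reached, and it suffices to show $\EE[T]=\mathcal{O}(N^2)$.

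Next I would express $\EE[T]$ through the absorbing-chain machinery already set up for Theorem~\ref{thm:eq}. Since the transient block of $\mathbf{Q}$ is $\mathbf{I}_{N^3}-\mathbf{A}_{II}$, the fundamental matrix is $(\mathbf{I}-(\mathbf{I}_{N^3}-\mathbf{A}_{II}))^{-1}=\mathbf{A}_{II}^{-1}$, whose row sums are the expected absorption times; thus $\EE[T]=\mathbf{e}_c^T\mathbf{A}_{II}^{-1}\mathbf{1}$. Equivalently, $u(v):=\EE[T\mid \text{start at }v]$ is the unique solution of the discrete Dirichlet problem $(\mathcal{L}u)(v)=-1$ at interior nodes with $u\equiv 0$ on the boundary, where $(\mathcal{L}f)(v)=\sum_{i\in\mathcal{N}(v)}\beta_i\,(f_i-f_v)$ is the generator of the walk. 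The task reduces to bounding $u(\text{center})$.

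For the upper bound I would use a maximum-principle comparison against an explicit supersolution. A single step moves $v$ to a neighbor $v'$ differing by $\pm1$ in one coordinate $k$, so $\|v'-c\|^2-\|v-c\|^2=\pm2(v_k-c_k)+1$ and hence $\EE[\|v'-c\|^2-\|v-c\|^2\mid v]=1+b(v)$, where the variance contributes the crucial $+1$ and $b(v)$ is the drift induced by dielectric asymmetry. Taking $w(v)=A\,(R^2-\|v-c\|^2)$ with $R=\Theta(N)$ and a suitable constant $A$, a short computation shows $\mathcal{L}w\le-1$ once the drift is controlled, and then $u\le w=\mathcal{O}(N^2)$ by the maximum principle, yielding $\EE[T]=\mathcal{O}(N^2)$ and the claim.

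The hard part will be the drift term $b(v)$: because the $\beta_i$ are not uniform (they encode the permittivity ratios), the walk is not a symmetric random walk, and $b(v)$ could in principle point inward and delay escape. I expect to handle this by invoking uniform ellipticity --- the coefficients $\alpha_i=\varepsilon_i/(\varepsilon_i+\varepsilon_v)$, and hence the $\beta_i$, are bounded away from $0$ by a constant depending only on the finite contrast $\varepsilon_{\max}/\varepsilon_{\min}$ --- so that MicroWalk is a consistent discretization of a uniformly elliptic diffusion on a region of linear extent $N$ lattice units. For such a diffusion the continuous exit time is finite, and the discrete exit time scales as $\Theta(N^2)$, the constant absorbing the dielectric contrast while leaving the $N^2$ dependence on $N$ intact; this matches the physical intuition that a diffusing particle needs $\sim N^2$ steps to traverse a distance of $N$ lattice spacings, and it is consistent with the loose ``about $\mathcal{O}(N^2)$'' in the statement.
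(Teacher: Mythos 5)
Your first two steps reproduce the paper's own argument: the $\mathcal{O}(1)$ loop body (though note the paper is slightly more careful here --- deciding which dielectric each of the 7 queried nodes lies in is not automatically constant-time; the paper invokes preprocessing via space management plus memoization of the $\alpha_i$ to get $\mathcal{O}(1)$), and the absorption-time identity $\EE[T]=\mathbf{e}_c^T\mathbf{A}_{II}^{-1}\mathbf{1}$, which the paper also obtains from the fundamental matrix of the absorbing chain. Where you diverge is in how the $\mathcal{O}(N^2)$ bound on $\EE[T]$ is justified. The paper does not attempt a rigorous general bound at all: it evaluates $\mathbf{e}_c^T\mathbf{A}_{II}^{-1}\mathbf{1}$ exactly for the single-dielectric case (getting $\EE[\tau]\approx 0.3373N^2$), and for general dielectrics it explicitly concedes the bound is ``difficult to bound exactly'' and falls back on the numerical evidence of Fig.~\ref{fig:expand}(a), concluding the complexity is ``practically'' $\mathcal{O}(N^2)$. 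You attempt to prove the general case, which would be strictly stronger --- but your argument does not close.

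The gap is precisely the one you flag and then wave away. The quadratic barrier $w(v)=A\left(R^2-\|v-c\|^2\right)$ fails once the drift term $b(v)\cdot(v-c)$ can be of order $-\Theta(N)$, and your proposed rescue --- ``uniform ellipticity, since the $\beta_i$ are bounded below by a contrast-dependent constant'' --- is not a valid rescue: bounded-below transition probabilities do \emph{not} imply diffusive exit times. A 1-D birth--death chain that steps toward the center with probability $0.6$ and away with probability $0.4$ on each side is perfectly elliptic yet has exit time $e^{cN}$; confining drift is fully compatible with ellipticity. What actually saves MicroWalk is a structural property you never identify: by \eqref{eq:phi}, the chain moves from $v$ to $i$ with probability proportional to $c_{vi}=\varepsilon_v\varepsilon_i/(\varepsilon_v+\varepsilon_i)$, i.e., MicroWalk is a reversible random walk on a conductance network (divergence-form discretization), with all conductances confined to $[\varepsilon_{\min}/2,\,\varepsilon_{\max}]$. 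Under bounded contrast $\kappa=\varepsilon_{\max}/\varepsilon_{\min}$, a confining drift of the 0.6/0.4 type would require conductances growing geometrically toward the center, which is impossible; the depth of any effective potential well is at most $\log\kappa$, uniformly in $N$. For such bounded-conductance reversible walks, $\EE[T]=\mathcal{O}_{\kappa}(N^2)$ is indeed a theorem, but it is not reachable by the maximum principle with a quadratic barrier; it requires genuinely heavier machinery (Nash-type heat-kernel upper bounds for the random-conductance model, or Green's-function/effective-resistance estimates). So the correct completion of your outline is: identify the reversible conductance structure, then invoke or prove the conductance-model exit-time bound --- at which point you would have a proof stronger than the paper's own (whose general case rests on numerics). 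As written, the decisive step is unsupported, and the justification offered for it is false in general.
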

    Notice that line 4 in Alg. \ref{algo:microwalk} would query the dielectrics at 7 local nodes. 
    By preprocessing dielectric structures within current transition cube via efficient space management \cite{song2020floating}, and applying memorization to cache $\alpha_i$ values for reuse, we can ensure that \rev{executing line 4 costs $\mathcal{O}(1)$ time}. 
    Thus, the loop body is \rev{of $\mathcal{O}(1)$ complexity}.

    \rev{Denote the number of iterations in Alg. \ref{algo:microwalk} by $\tau$. Directly following\mbox{\cite[Theorem 11.5]{grinstead2012introduction}}, we can compute its expectation via:}
    \begin{equation}\label{eq:absorb}
        \mathbb{E}(\tau) = \mathbf{e}^T_c\mathbf{A}_{II}^{-1}\mathbf{1},
    \end{equation}
    \rev{where $\mathbf{1}$ is the all-one vector of length $N^3$.}
    \rev{For single-dielectric cubes, $\mathbf{A}_{II}$ follows a fixed structure, from which we can derive $\mathbb{E}(\tau)\approx 0.3373N^2$. For general dielectric configurations, the arbitrariness of $\mathbf{A}_{II}$ makes $\mathbb{E}(\tau)$ difficult to bound exactly, but our numerical experiments confirm that it is of $\mathcal O (N^2)$ as shown in Fig. \ref{fig:expand}(a). Therefore, we conclude that the time complexity of Alg.~\ref{algo:microwalk} is practically $\mathcal O(N^2)$.}

Furthermore, MicroWalk offers a distinct advantage: it enables larger hop lengths and faster stopping by allowing conductors inside transition cubes.
Existing works require conductor-free cubes for ease of SGF characterization \cite{yu2013rwcap,huang2025parallel,huang2024enhancing,visvardis2023deep,zhang2016improved,yang2022reduce,song2020floating}. This constraint is removed since MicroWalk can adapt stochastic finite difference schemes for intrusive conductors, as shown in Fig. \ref{fig:expand}(b).
This version of \underline{MicroWalk} with \underline{e}xpanded transition cubes is named \textbf{MicroWalk-E}. Note that due to the coarser finite difference lattice, MicroWalk-E may trade off some accuracy. 
\rev{The expansion factor of $5\times$ is empirically chosen in experiments (see Sec. \ref{sec:result}), which show that the caused error remains insignificant.}

\begin{figure}[t]
    \centering
      \setlength{\abovecaptionskip}{0cm}
      \setlength{\belowcaptionskip}{0cm}
    \includegraphics[width=\linewidth]{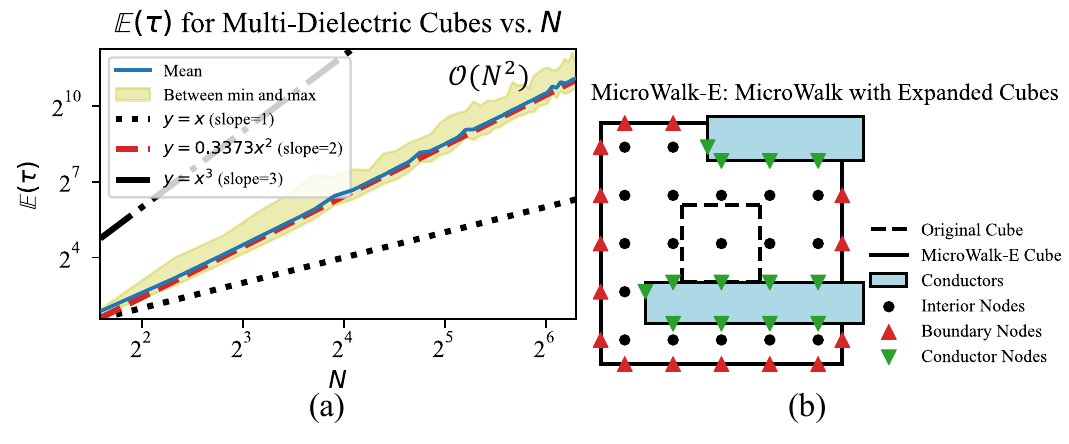}
    \caption{(a) For each $N$, we generate 500 cubes enclosing randomly-placed dielectric blocks with $\varepsilon\!\sim\!\text{Exp}(0.1)$\rev{, an exponential distribution,} and compute their $\mathbb{E}(\tau)$ via \eqref{eq:absorb}. The min/max/mean \rev{values of $\mathbb{E}(\tau)$} are plotted on log-scale, revealing a clear $\mathcal O(N^2)$ scaling. (b) The MicroWalk-E variant allows transition cubes to contain conductors and adjusts the lattice accordingly.}
    \label{fig:expand}
\end{figure}

\subsection{\rev{Hybrid Strategies for Achieving Higher Efficiency}}\label{sec:hybrid}
\rev{In terms of time complexity, MicroWalk is less efficient than AGF\cite{huang2024enhancing} (see Table~\ref{tab:comparison}).}
As detailed in \cite{huang2024enhancing}, when paired with a hash-table cache, AGF can obtain SGFs in effectively $\mathcal O(1)$ time \rev{due to the high hit rate} (see \cite[Sec. 5.2]{huang2024enhancing}).
Considering this, we devise a \textit{hybrid strategy} combining AGF and MicroWalk for higher \rev{efficiency.
For the FRW transitions other than the first, stratified cubes are handled via AGF with hash-table cache, and non-stratified cubes are handled via the proposed MicroWalk. Note that the deep-learning-driven approximation in\cite{visvardis2023deep} is expensive, for which constructing the input of neural network already costs $\mathcal O(N^3)$ time.} 

\begin{table}[h]
    \centering
    \setlength{\abovecaptionskip}{0cm}
	\setlength{\belowcaptionskip}{0cm}
    \caption{Comparisons of three \rev{accurate FRW transition} approaches.}
    \label{tab:comparison}
    \setlength\tabcolsep{1pt}
    \begin{tabular}{l ccc}
    \toprule
     & FDM & AGF \cite{huang2024enhancing} & MicroWalk \\ 
    \midrule
    Complexity for SGF & $\mathcal O(N^6)$ & $\mathcal O(1)$/$\mathcal O(N^3)$ for hit/miss & - \\
    Complexity for Sampling & $\mathcal O(\log N)$ & $\mathcal O(\log N)$ & $\mathcal O(N^2)$ \\
    Dielectric Profile & Any & Stratified dielectrics & Any \\
    \bottomrule
    \end{tabular}
\end{table}

\rev{For the first transition in a walk, which requires SGF gradient sampling and weight value calculation, the problem is more involved. AGF is applied to stratified cubes,
but stochastic finite difference is not suitable for computing weight value since it would considerably increase the variance. 
So, we handle the non-stratified cubes encountered in the first step via the AGF-based special treatment as per in\cite{huang2024enhancing}, whose flowchart is shown as Fig.~\ref{fig:transition_flowchart}. The accuracy of this special treatment has been validated through our extensive experiments across various process technologies.
This strategy for the first transition was not presented in such detail in\cite{huang2024enhancing}.} 

\begin{figure}[h]
  \centering
    \setlength{\abovecaptionskip}{0cm}
	\setlength{\belowcaptionskip}{0cm}
  \includegraphics[width=\linewidth]{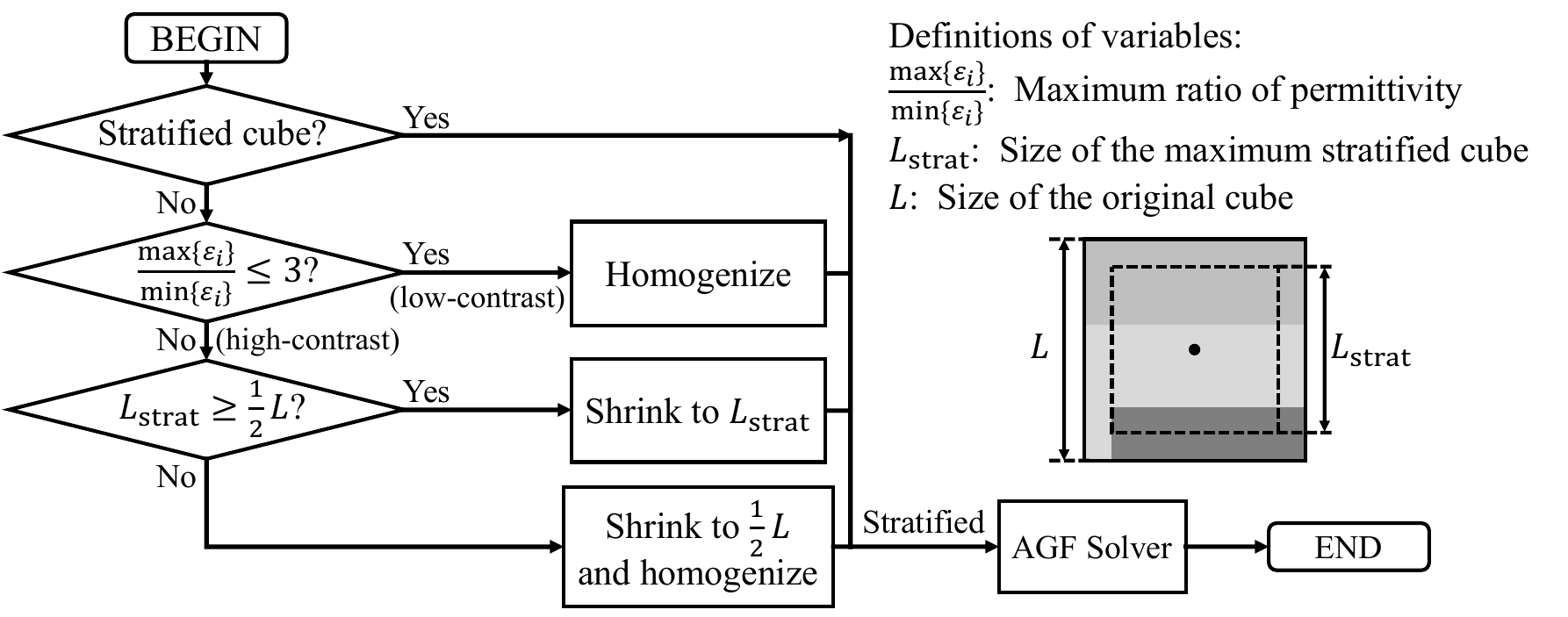}
  \caption{\rev{The special hybrid treatment for the first transition, as per in\cite{huang2024enhancing}.}}
  \label{fig:transition_flowchart}
\end{figure}
\section{Numerical Results}

We have implemented the proposed MicroWalk (Alg. \ref{algo:microwalk}) and MicroWalk-E (with $5\times$ cube expansion) in C++ based on our codes for \cite{huang2024enhancing}. 
Collaborating the hybrid strategy in Sec. \ref{sec:hybrid}, we obtain capacitance solvers \textbf{FRW-MW} and \textbf{FRW-MWE}.
The baselines include \textbf{FRW-FDM} and \textbf{FRW-AGF} \cite{huang2024enhancing}. Except for the \texttt{Transition} step, \rev{the above} FRW solvers share the same implementation and adopt $N\!=\!24$ as \cite{huang2024enhancing}. \rev{In addition, a precomputation-based FRW solver \textbf{RWCap4}\cite{song2020floating,yang2022reduce}, and} commercial tools Raphael and QuickCap are also tested for comparison. \rev{More details about the tested methods are in Table \ref{tab:baselines}. RWCap4 and FRW-AGF can be downloaded from\cite{numbda}.}

\begin{table}[h]
    \centering
    \setlength{\abovecaptionskip}{0cm}
	\setlength{\belowcaptionskip}{0cm}
    \begin{threeparttable}

    \caption{\rev{Comparisons of the tested FRW solvers.}}
    \label{tab:baselines}
    \setlength\tabcolsep{2pt}
    \begin{tabular}{l ccc}
    \toprule
    \tworowscol{Method} & \tworowscol{\texttt{FirstTransition}} & \twocols{\texttt{Transition}} \\
    \cmidrule(lr){3-4} 
    \multicolumn{1}{c}{} & \multicolumn{1}{c}{} & Stratified & Non-stratified \\
    \midrule
    RWCap4 \cite{song2020floating, yang2022reduce} & Precomput. & Precomput. & Eight-octant \\
    FRW-AGF \cite{huang2024enhancing} & Special hybrid (Fig.~\ref{fig:transition_flowchart}) & AGF & Special hybrid w/ Oct.\tnote{a} \\
    FRW-FDM & Special hybrid (Fig.~\ref{fig:transition_flowchart}) & AGF & FDM (Alg. \ref{algo:fdm}) \\
    FRW-MW & Special hybrid (Fig.~\ref{fig:transition_flowchart}) & AGF & MicroWalk (Alg. \ref{algo:microwalk}) \\
    FRW-MWE & Special hybrid (Fig.~\ref{fig:transition_flowchart}) & AGF & MicroWalk-E \\
    \bottomrule
    \end{tabular}
    \begin{tablenotes}
      \item[a]Similar to Fig.~\ref{fig:transition_flowchart} but with the ``Homogenize'' branch replaced by ``Approximate with the eight-octant model for on-the-fly sampling''.
    \end{tablenotes}
    \end{threeparttable}
    
\end{table}

All test cases are real-world \rev{conductor} structures from our industrial partners.
Regression test cases RT1-RT8 are namely Cases 1-8 in \cite{huang2024enhancing}. As for technology nodes, RT1-RT3 and RT7-RT8 are at 16nm, RT4 at 28nm, RT5-RT6 at 55nm. Two new Cases 1-2 are FinFET structures at 12nm node. \rev{From Fig.~\ref{fig:cross} we see that they involve different kinds of} non-stratified dielectrics, with Cases 1-2 being the most complicated due to high-k dielectrics ($\varepsilon_{\text{high}}\!=\!22$).
For RT1-RT8, the FRW convergence criterion follows \cite[Sec. 5.2]{huang2024enhancing}. For the others, the standard deviation is set as $\!<\!1\%$ on self and the largest coupling capacitances.
Experiments are conducted with serial computing on Intel Xeon 8375C CPU. 

\begin{figure}[h]
  \centering
    \setlength{\abovecaptionskip}{0cm}
	\setlength{\belowcaptionskip}{0cm}
  \includegraphics[width=0.95\linewidth]{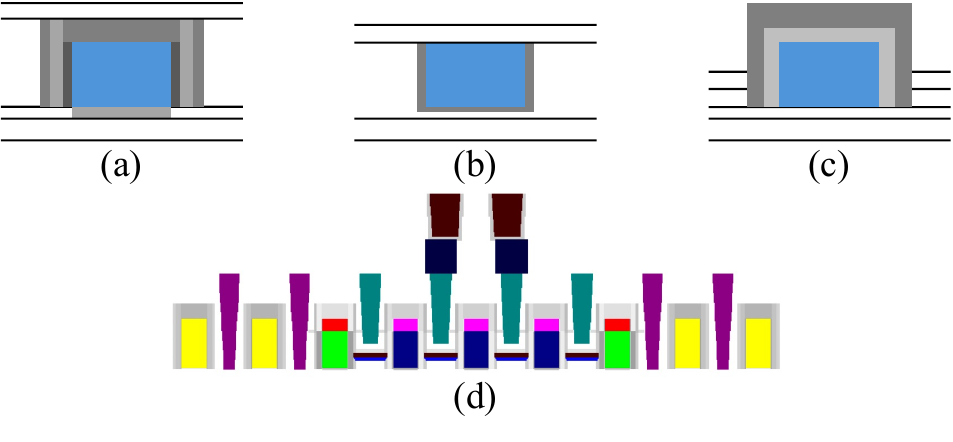}
  \caption{\rev{Simplistic cross-sections of the test cases, where dielectric blocks are in different shades of grey. (a) Cases RT1 and RT3, (b) Case RT2, (c) Cases RT4-RT6. (d) Cases 1-2 (interfaces of planar dielectrics not shown).}}
  \label{fig:cross}
\end{figure}

\subsection{Comparison Results of Capacitance Extraction}\label{sec:result}
The results (mainly self capacitance $C_{\text{self}}$ and runtime $T_{\text{total}}$) obtained with \rev{the different FRW solvers} are listed in Table \ref{tab:acc1}, along with the reference values \rev{from} Raphael and Quickcap. The average error of capacitances is defined as
$\text{Err}_{\text{avg}}=\frac{\sum_{\text{all caps}}\!|C_{ij}^{\text{(ours)}}-C_{ij}^{\text{(ref)}}|}{\sum_{\text{all caps}}|C_{ij}^{\text{(ref)}}|}.$
From Table \ref{tab:acc1} we see that FRW-FDM, FRW-MW, and FRW-MWE share similar accuracy, while \rev{RWCap4 and} FRW-AGF \rev{yield} wrong results on \rev{some challenging cases}. It takes 2349s, 132hr, 522s, 1883s, and 595s to finish all the extractions for QuickCap, FRW-FDM, FRW-AGF, FRW-MW, and FRW-MWE, respectively. FRW-MWE runs $3.9\times$, $802\times$ and $3.2\times$ faster than QuickCap, FRW-FDM and FRW-MW respectively. It is only $12\%$ slower than FRW-AGF, which is acceptable given the accuracy improvement. 

\rev{For the proposed FRW-MW solver, across all experiments, 79.9\% of first transition cubes are of stratified dielectrics, corresponding to the first branch in Fig.~\ref{fig:transition_flowchart}. The second, third, and fourth branches account for 9.3\%, 7.3\%, and 3.5\% of first transition cubes, respectively. For transition cubes other than the first, 70.0\% are stratified and 30.0\% are non-stratified.}

\begin{table*}[!t]
    \centering
	\setlength{\abovecaptionskip}{0pt}
    \setlength{\belowcaptionskip}{0pt}
    \caption{The computational results of proposed MicroWalk-based methods and the baselines, including the time cost of MicroWalk process ($T_{\mathrm{MW}}$). \rev{The  minimum $\mathrm{Err}_{\mathrm{avg}}$ is marked, and the $\mathrm{Err}_{\mathrm{avg}}$ larger than 5\% is underscored.} $C_{\mathrm{self}}$ is in unit of $10^{-15}$ F.
    }
    \label{tab:acc1}
    \setlength\tabcolsep{3pt}
\begin{tabular}{c c cc ccc ccc ccc cccc cccc}
    \toprule
    \tworowscol{Case} & Raphael & \twocols{QuickCap} & \threecols{\rev{RWCap4\cite{song2020floating,yang2022reduce}}} & \threecols{FRW-FDM} & \threecols{FRW-AGF \cite{huang2024enhancing}} & \fourcols{FRW-MW (Ours)} & \fourcols{FRW-MWE (Ours)} \\
    \cmidrule(lr){2-2} \cmidrule(lr){3-4} \cmidrule(lr){5-7} \cmidrule(lr){8-10} \cmidrule(lr){11-13} \cmidrule(lr){14-17} \cmidrule(lr){18-21}
    \multicolumn{1}{c}{} & $C_{\text{self}}$ & $C_{\text{self}}$ & $T_{\text{total}}$ & \rev{$C_{\text{self}}$} & \rev{$\text{Err}_{\text{avg}}$} & \rev{$T_{\text{total}}$} & $C_{\text{self}}$ & $\text{Err}_{\text{avg}}$ & $T_{\text{total}}$ & $C_{\text{self}}$ & $\text{Err}_{\text{avg}}$ & $T_{\text{total}}$ & $C_{\text{self}}$ & $\text{Err}_{\text{avg}}$ & $T_{\text{MW}}$ &$T_{\text{total}}$ & $C_{\text{self}}$ & $\text{Err}_{\text{avg}}$ & $T_{\text{MW}}$ &  $T_{\text{total}}$ \\
    \midrule
RT1 & 15.92 & 15.80 & 397s & 14.31 & \underline{10\%} & 118s & 15.86 & 0.4\% & 24hr & 15.84 & \textbf{0.4\%} & 65.7s& 15.82 & 0.6\% & 315s & 342s& 16.09 & 1.1\% & 55.9s & 67.8s\\
RT2 & 16.46 & 16.91 & 149s & 17.11 & 4.7\% & 19.8s & 16.70 & 1.7\% & 3.3hr & 16.72 & 1.7\% & 11.4s& 16.71 & 1.6\% & 40.6s & 49.1s& 16.67 & \textbf{1.4\%} & 7.2s & 12.5s\\
RT3 & 21.58 & 21.40 & 678s & 20.81 & 3.5\% & 34.9s & 21.21 & 1.9\% & 22hr & 21.12 & 2.3\% & 81.6s& 21.19 & \textbf{1.8\%} & 269s & 302s& 22.15 & 3.2\% & 41.6s & 57.9s\\
RT4 & 8.210 & 8.142 & 160s & 8.336 & 4.4\% & 105s & 8.226 & 0.2\% & 24hr & 8.305 & 1.2\% & 64.5s& 8.172 & \textbf{0.5\%} & 267s & 307s& 8.215 & 0.6\% & 56.1s & 82.1s\\
RT5 & 14.19 & 14.13 & 288s & 14.41 & 3.5\% & 118s & 14.06 & 0.8\% & 19hr & 14.25 & 0.7\% & 78.4s& 14.05 & 0.9\% & 215s & 265s& 14.15 & \textbf{0.3\%} & 51.0s & 89.7s\\
RT6 & 9.584 & 9.571 & 205s & 10.03 & \underline{6.7\%} & 89.0s & 9.577 & 0.6\% & 23hr & 9.835 & 3.1\% & 61.0s& 9.535 & \textbf{0.9\%} & 254s & 291s& 9.649 & 1.4\% & 60.1s & 84.4s\\
RT7 & 0.933 & 0.894 & 116s & 0.852 & \underline{7.0\%} & 18.0s & 0.890 & 3.3\% & 1.1hr & 0.911 & \textbf{1.0\%} & 16.5s& 0.903 & 2.0\% & 13.5s & 26.7s& 0.902 & 2.3\% & 4.4s & 16.5s\\
RT8 & 0.883 & 0.857 & 61s & 0.816 & \underline{6.8\%} & 10.0s & 0.880 & 0.5\% & 0.61hr & 0.868 & 0.9\% & 10.0s& 0.878 & \textbf{0.6\%} & 7.0s & 14.3s& 0.882 & \textbf{0.6\%} & 2.1s & 8.7s\\
1 & 0.188 & 0.187 & 80s & 0.179 & \underline{9.3\%} & 60.7s & 0.188 & 0.4\% & 8.5hr & 0.219 & \underline{20\%} & 69.3s& 0.186 & \textbf{0.8\%} & 92.4s & 150s& 0.185 & 2.0\% & 36.2s & 85.5s\\
2 & - & 1.241 & 215s & 1.195 & \underline{7.8\%} & 57.3s & 1.243 & 0.9\% & 7.2hr & 1.544 & \underline{24\%} & 63.7s& 1.248 & \textbf{1.6\%} & 70.0s & 136s& 1.218 & 1.8\% & 31.1s & 90.3s\\
    \bottomrule
    \end{tabular}
\end{table*}

\subsection{Case Study with Varying Parameters}
We conduct two parametric studies on Case 1 to explore MicroWalk's behaviors: 1) We vary $N$ to observe the scaling of runtime (Fig. \ref{fig:case1}(a)). MicroWalk's runtime grows quadratically, which verifies Proposition~\ref{thm:complexity}, and dominates the total runtime. 2) We further increase the $\varepsilon_{\text{high}}$ of high-k dielectrics to examine the capacitances solved by these methods (Fig.~\ref{fig:case1}(b)). The 
results confirm that FRW-MW maintains accuracy and robustness across a broad range of permittivity, whereas FRW-AGF's approximation induces substantial systematic errors.
\begin{figure}[h]
    \centering
      \setlength{\abovecaptionskip}{0cm}
      \setlength{\belowcaptionskip}{0cm}
    \includegraphics[width=1\linewidth]{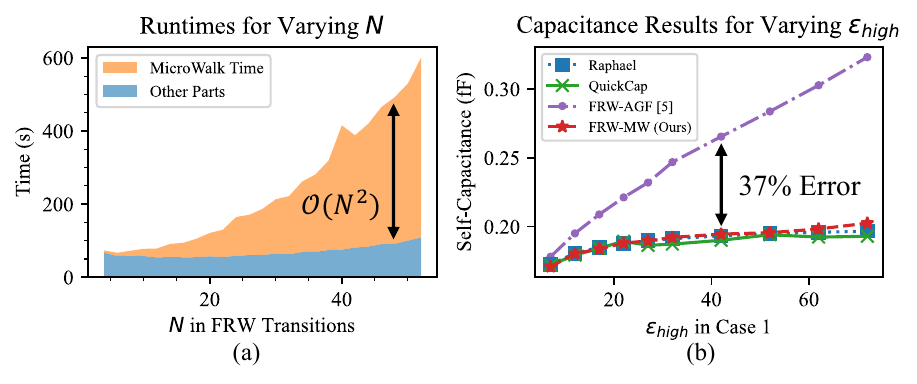}
    \caption{Parametric studies on Case 1: (a) Performance scaling: runtimes of FRW-MW for varying $N$, split into those for MicroWalk and other parts; (b) Robustness test: capacitance results for varying $\varepsilon_{\text{high}}$, across various methods.}
    \label{fig:case1}
\end{figure} 

\section{Conclusion}

In this work, we propose MicroWalk, a stochastic finite difference algorithm for arbitrary-dielectric FRW transitions. With it, we have enhanced the state-of-the-art 3-D capacitance solver FRW-AGF \cite{huang2024enhancing} for the accuracy on handling complicated non-stratified dielectrics, and developed the solver named FRW-MW and its variant FRW-MWE.
Experiment results validate the accuracy of them on structures under advanced technologies, in comparison with commercial golden tools. While for the structures where FRW-AGF yields errors above 20\%, the proposed FRW-MWE exhibits good accuracy with $\le$2\% error at the cost of a marginal increase of runtime.

\bibliographystyle{IEEEtran}
\bibliography{main}

\begin{thebibliography}{10}
\providecommand{\url}[1]{#1}
\csname url@samestyle\endcsname
\providecommand{\newblock}{\relax}
\providecommand{\bibinfo}[2]{#2}
\providecommand{\BIBentrySTDinterwordspacing}{\spaceskip=0pt\relax}
\providecommand{\BIBentryALTinterwordstretchfactor}{4}
\providecommand{\BIBentryALTinterwordspacing}{\spaceskip=\fontdimen2\font plus
\BIBentryALTinterwordstretchfactor\fontdimen3\font minus \fontdimen4\font\relax}
\providecommand{\BIBforeignlanguage}[2]{{%
\expandafter\ifx\csname l@#1\endcsname\relax
\typeout{** WARNING: IEEEtran.bst: No hyphenation pattern has been}%
\typeout{** loaded for the language `#1'. Using the pattern for}%
\typeout{** the default language instead.}%
\else
\language=\csname l@#1\endcsname
\fi
#2}}
\providecommand{\BIBdecl}{\relax}
\BIBdecl

\bibitem{yu2014advanced}
W.~Yu and X.~Wang, \emph{Advanced Field-Solver Techniques for RC Extraction of Integrated Circuits}.\hskip 1em plus 0.5em minus 0.4em\relax Springer, 2014.

\bibitem{jin2015theory}
J.~Jin, \emph{Theory and Computation of Electromagnetic Fields, Second Edition}.\hskip 1em plus 0.5em minus 0.4em\relax Wiley, 2015.

\bibitem{yu2013rwcap}
W.~Yu, H.~Zhuang, C.~Zhang, G.~Hu, and Z.~Liu, ``{RWCap: A floating random walk solver for 3-D capacitance extraction of very-large-scale integration interconnects},'' \emph{IEEE Trans. Comput.-Aided Des. Integr. Circuits Syst.}, vol.~32, no.~3, pp. 353--366, 2013.

\bibitem{huang2025parallel}
J.~Huang, S.~Liu, and W.~Yu, ``A parallel floating random walk solver for reproducible and reliable capacitance extraction,'' in \emph{Proc. DATE}, 2025.

\bibitem{huang2024enhancing}
J.~Huang and W.~Yu, ``Enhancing {3-D} random walk capacitance solver with analytic surface {Green's} functions of transition cubes,'' in \emph{Proc. DAC}, 2024.

\bibitem{zhang2016improved}
B.~Zhang, W.~Yu, and C.~Zhang, ``Improved pre-characterization method for the random walk based capacitance extraction of multi-dielectric vlsi interconnects,'' \emph{Int. J. Numer. Model.: Electron. Netw. Devices Fields}, vol.~29, no.~1, pp. 21--34, 2016.

\bibitem{yang2022reduce}
M.~Yang, W.~Yu, M.~Song, and N.~Xu, ``Volume reduction and fast generation of the precharacterization data for floating random walk-based capacitance extraction,'' \emph{IEEE Trans. Comput.-Aided Des. Integr. Circuits Syst.}, vol.~41, no.~5, pp. 1467--1480, 2022.

\bibitem{song2020floating}
M.~Song, M.~Yang, and W.~Yu, ``Floating random walk based capacitance solver for {VLSI} structures with non-stratified dielectrics,'' in \emph{Proc. DATE}, 2020, pp. 1133--1138.

\bibitem{visvardis2023deep}
M.~Visvardis, P.~Liaskovitis, and E.~Efstathiou, ``Deep-learning-driven random walk method for capacitance extraction,'' \emph{IEEE Trans. Comput.-Aided Des. Integr. Circuits Syst.}, vol.~42, no.~8, pp. 2643--2656, 2023.

\bibitem{maire2016stochastic}
S.~Maire and G.~Nguyen, ``Stochastic finite differences for elliptic diffusion equations in stratified domains,'' \emph{Mathematics and Computers in Simulation}, vol. 121, pp. 146--165, 2016.

\bibitem{le1992stochastic}
{Y. L. {Le Coz} and R. B. Iverson}, ``{A stochastic algorithm for high speed capacitance extraction in integrated circuits},'' \emph{Solid-State Electron.}, vol.~35, no.~7, pp. 1005--1012, 1992.

\bibitem{huang2024floating}
J.~Huang, M.~Yang, and W.~Yu, ``The floating random walk method with symmetric multiple-shooting walks for capacitance extraction,'' \emph{IEEE Trans. Comput.-Aided Des. Integr. Circuits Syst.}, vol.~43, no.~7, pp. 2098--2111, 2024.

\bibitem{grinstead2012introduction}
C.~Grinstead and J.~Snell, ``Absorbing markov chains,'' in \emph{Introduction to Probability, Second Edition}.\hskip 1em plus 0.5em minus 0.4em\relax American Mathematical Society, 2012, ch. 11.2, pp. 415--432.

\bibitem{numbda}
{Numbda Group,} \emph{RWCap-v4 and RWCap-v5}. Accessed: Jun. 2025. [Online]. Available: \url{https://numbda.cs.tsinghua.edu.cn/download.html}.

\end{thebibliography}


\end{document}